\newtheorem{lem}{Lemma}[section]
\newtheorem{rem}{Remark}[section]
\newtheorem{prop}{Proposition}[section]
\newtheorem{ass}{Assumption}[section]
\newcounter{hypA}
\newenvironment{hypA}{\refstepcounter{hypA}\begin{itemize}
  \item[({\bf A\arabic{hypA}})]}{\end{itemize}}
\newcommand{\dif}{\mathrm{d}}
\newcommand{\bbE}{\mathbb{E}}
\newcommand{\bbN}{\mathbb{N}}
\newcommand{\cO}{\mathcal{O}}
\newcommand{\cI}{{\cal I}}
\begin{document}


\begin{center}

{\Large \textbf{A Multi-Index Markov Chain Monte Carlo Method}}
%
%
BY AJAY JASRA$^{1}$, KENGO KAMATANI$^{2}$, KODY LAW$^{3}$ \& YAN ZHOU$^{1}$


{\footnotesize $^{1}$Department of Statistics \& Applied Probability,
National University of Singapore, Singapore, 117546, SG.}
{\footnotesize E-Mail:\,}\texttt{\emph{\footnotesize staja@nus.edu.sg; stazhou@nus.edu.sg}}\\
{\footnotesize $^{2}$Graduate School of Engineering Science, Osaka University, Osaka, 565-0871, JP.}
{\footnotesize E-Mail:\,}\texttt{\emph{\footnotesize kamatani@sigmath.es.osaka-u.ac.jp}}\\
{\footnotesize $^{3}$Computer Science and Mathematics Division,
Oak Ridge National Laboratory,  Oak Ridge, 37934 TN, USA.}
{\footnotesize E-Mail:\,}\texttt{\emph{\footnotesize kodylaw@gmail.com}}
\end{center}


\abstract{In this article we consider computing expectations w.r.t.~probability laws associated
to a certain class of stochastic systems. In order to achieve such a task, one must not
only resort to numerical approximation of the expectation, but also to a biased discretization of the
associated probability. We are concerned with the situation for which the discretization is required in multiple
dimensions, for instance in space-time. In such contexts, it is known that the multi-index Monte Carlo (MIMC)
method of \cite{mimc} can improve upon i.i.d.~sampling from the most accurate approximation of the probability law.
Through a non-trivial modification of the multilevel Monte Carlo (MLMC) method, 
this method can reduce the work to obtain a given level of error, 
relative to i.i.d.~sampling and relative even to MLMC. 
In this article we consider the case when such probability laws are too complex to be sampled
independently, for example a Bayesian inverse problem where evaluation of the likelihood requires
solution of a partial differential equation (PDE) model which needs to be approximated at finite resolution.
We develop a modification of the MIMC method 
which allows one to use standard Markov chain
Monte Carlo (MCMC) algorithms to replace independent and coupled sampling,
in certain contexts. We prove a variance theorem for a simplified estimator 
which shows that using our MIMCMC method is preferable, in the sense above, 
to i.i.d.~sampling from the most accurate approximation, 
under appropriate assumptions. 
The method is numerically illustrated on 
a Bayesian inverse problem associated to a stochastic partial differential equation (SPDE),
where the path measure is conditioned on some observations.}
\\
\noindent \textbf{Key words}: 
{Multi-Index Monte Carlo, Markov chain Monte Carlo, Stochastic Partial Differential equations.}


\section{Introduction}

Stochastic systems associated to 
discretization over
multiple dimensions occur in a wide range of applications. 
For instance, such stochastic
systems can represent a process that evolves in both space and time, such as
stochastic partial differential equations (SPDEs) 
and random partial differential equations. 
See for instance \cite{beskos1} for a list of applications.  
In this article, we are interested in the case where we want to compute expectations 
with respect to (w.r.t.) such probability laws. 
In most practical applications of interest, the computation of the expectations 
is not analytically possible. 
This is for at least two reasons: (1) such probability laws
are often not tractable without some discretization and 
(2) even after discretization, the expectations are not tractable and need to be approximated.
One way to deal with this issue is to sample independently from the discretized probability law, 
and use the Monte Carlo method.

One well-known method for improving over Monte
Carlo is the popular Multilevel Monte Carlo (MLMC) method \cite{giles,giles1,hein}. This approach introduces a hierarchy of discretizations, and a 
telescopic sum representation of the expectation of interest.  
Assuming the computational cost of sampling a discretized law increases 
as the approximation error falls, 
and that independent sampling of \emph{couples} (pairs) of the discretized laws is possible, 
then the required work to achieve a given level of error can be reduced by using MLMC. 
The requirement to of independent (or exact) sampling from couples
with the correct marginals is often not possible in many contexts.
This has been dealt with in several recent works, 
such as \cite{beskos,jasra,mlpf,mlpf1}.

In the scenario of this article, the discretization is in multiple dimensions. 
A more efficient version of the MLMC method can be designed in this case, 
called multi-index Monte Carlo (MIMC) \cite{mimc}. 
The method essentially relies on being able to independently sample 
from $2^d$ terms 
in a dependent manner, where $d$ is the number of dimensions which are discretized. 
We will expand upon this point later on,
but the idea is to first construct a new telescopic representation of the expectation 
with respect to the most accurate probability law, in terms of
differences of differences (for $d=2$, or differences$^d$) of expectations. 
These higher order differences are then approximated using correlated random variables. 
Again, assuming the computational cost of sampling a discretized law increases as the approximation error falls, then  
the work to achieve a given level of error is reduced by using MIMC.
Under suitable regularity conditions, and assuming a suitable choice of indices is chosen,
this can also be preferable to MLMC.

In this article we consider the case when such probability laws (or the couplings) 
are too complex to be sampled independently. 
This occurs for example when the measure of a stochastic process
is conditioned on real data, such as in real-time data assimilation \cite{law2015data} 
or online filtering \cite{llospis}, or in a static Bayesian inverse problem 
\cite{stuart2010inverse, hoang2014determining}.
In the simplest case this means that the probability
measure of the conditioned process can only be evaluated up to a normalizing constant,
but cannot be simulated from.
We develop a modification of the MIMC method which allows one to use standard MCMC algorithms to replace independent and coupled sampling, in certain contexts. 
We prove a variance theorem which shows that using
our MIMCMC method is preferable to using independent identically distributed (i.i.d.) 
random variables from the most accurate approximation, under appropriate assumptions
and in the sense of cost to obtain a given error tolerance. The proof is however, for a simplified estimator
and not the one implemented. 
The method is illustrated on a Bayesian inverse problem associated to an SPDE.

This paper is structured as follows. In Section \ref{sec:model} the exact context is given along with a short review of the MIMC method.
In Section \ref{sec:approach} our approach is outlined, along with a variance result. In Section \ref{sec:numerics} numerical results
are presented. The appendix includes a technical result used in our variance theorem.

\section{Modelling Context}
\label{sec:model}

We are interested in a random variable $x\in\mathsf{X}$, 
with $\sigma-$algebra $\mathcal{X}$, 
for which we want to compute
expectations of real-valued bounded and measurable functions $\varphi:\mathsf{X}\rightarrow\mathbb{R}$, $\mathbb{E}[\varphi(X)]$. 
We assume that the random variable $X$ is such that it is associated to a continuum system such as
a stochastic partial differential equation (SPDE). In practice one can only hope to evaluate a discretised
version of the random variable.

 Let $\alpha=\alpha_{1:d}=(\alpha_1,\dots,\alpha_d)\in\mathbb{N}_0^d$.
For any fixed and finite-valued 
index $\alpha$, one 
can obtain a biased approximation $X_{\alpha}\in\mathsf{X}_{\alpha}\subseteq\mathsf{X}$
(with $\sigma-$algebra $\mathcal{X}_{\alpha}$), where
we use the convention $\mathsf{X}_{\infty,\dots,\infty}=\mathsf{X}$.
Let $\varphi:\mathsf{X}\rightarrow\mathbb{R}$.
If $x\in\mathsf{X}$ and  $x \notin \mathsf{X}_{\alpha}$ for any $\alpha$ with $\alpha_i<\infty$ for some $i$, 
then $\varphi(x)$ is written, and if $x\in\mathsf{X}_{\alpha}$ for some $\alpha<\infty$, 
then $\varphi_{\alpha}(x)$ is used.
In general, $\mathbb{E}[\varphi_{\alpha}(X_{\alpha})]\neq\mathbb{E}[\varphi(X)]$, but
\begin{equation}
\lim_{\min_{1\leq i\leq d}\alpha_i \rightarrow+\infty}|\mathbb{E}[\varphi_{\alpha}(X_{\alpha})]-\mathbb{E}[\varphi(X)]| = 0.\label{eq:assump1}
\end{equation}
It is assumed that the computational cost associated with $X_{\alpha}$ increases as the values of $\alpha$
increase. We constrain $\alpha_{1:d}\in\mathcal{I}_{ L_1:L_d}  := \{\alpha\in\mathbb{N}_0^d:\alpha_1\in\{0,\dots,L_1\},\dots,\alpha_d\in\{0,\dots,L_d\}\}$.

To make things more precise, we assume that the probability measure of $X$ and $X_{\alpha}$
is defined as follows. Consider observations $y\in\mathsf{Y}$ and a likelihood function in $y$,
$g:\mathsf{Y}\times\mathsf{X}\rightarrow\mathbb{R}_+$.
When  $x\in\mathsf{X}$ and  $x \notin \mathsf{X}_{\alpha}$ for any $\alpha$ with $\alpha_i<\infty$ for some $i$, 
we write $g(y|x)$, and when $x\in\mathsf{X}_{\alpha}$ we write $g_{\alpha}(y|x)$. In both
situations $\int_{\mathsf{Y}}g(y|x)\dif y=\int_{\mathsf{Y}}g_{\alpha}(y|x')\dif y=1$ for any $(x,x')\in\mathsf{X}\times\mathsf{X}_{\alpha}$
and $\dif y$ a dominating measure.

We have for $x\in\mathsf{X}$ ,
$$
\pi(\dif x|y) \propto g(y|x) p(\dif x) ,
$$
with $p$ a probability measure on $\mathsf{X}$,
and for $x_{\alpha}\in\mathsf{X}_{\alpha}$ ,
$$
\pi_{\alpha}(\dif x|y) \propto g_{\alpha}(y|x) p_{\alpha}(\dif x) ,
$$
with $p_{\alpha}$ a probability measure on $\mathsf{X}_{\alpha}$.

\subsection{MIMC Methods}

Write $\mathbb{E}_{\alpha}$ as expectation w.r.t.~$\pi_{\alpha}$ 
and $\mathbb{E}$ as expectation w.r.t.~$\pi$.
Define the difference operator $\Delta_i$, $i\in\{1,\dots,d\}$ as
$$
\Delta_i \mathbb{E}_{\alpha}[\varphi_\alpha(X_{\alpha})]
= \left\{\begin{array}{ll}
\mathbb{E}_{\alpha}[\varphi_{\alpha}(X_{\alpha})]- \mathbb{E}_{\alpha-e_i}[\varphi_{\alpha-e_i}(X_{\alpha-e_i})]  & \textrm{if}~\alpha_i>0 \\
\mathbb{E}_{\alpha}[\varphi_{\alpha}(X_{\alpha})] & \textrm{o/w}
\end{array}\right.
$$
where $e_i$ are the canonical vectors on $\mathbb{R}^d$. 
Set $\Delta = \bigotimes_{i=1}^d \Delta_i := 
\Delta_d \cdots \Delta_1$.  
Observe the collapsing identity
\begin{equation}\label{eq:collapse}
\mathbb{E}[\varphi(X)] = \sum_{\alpha\in \mathbb{N}_0^d}
\Delta \mathbb{E}_{\alpha}[\varphi_{\alpha}(X_{\alpha})].
\end{equation}

Letting $\mathcal{I} \subset \mathbb{N}_0^d$, 
\cite{mimc} consider the biased approximation of $\mathbb{E}[\varphi(X)]$ 
given by 
\begin{equation}\label{eq:mimc}
\sum_{\alpha\in\mathcal{I}}\Delta \mathbb{E}_{\alpha}[\varphi_{\alpha}(X_{\alpha})].
\end{equation}
Each summand can be estimated by Monte Carlo, 
coupling the $1<k_\alpha \leq 2^d$ probability measures with indices 
$\alpha' = \alpha(1),\dots, \alpha(k_\alpha)$,
for a given $\alpha\in\mathcal{I}$.  
That is, for this given $\alpha$, 
one draws an i.i.d.~sample $(X_{\alpha(1)}^i,\dots,X_{\alpha(k_\alpha)}^i)$ 
for $i=1,\dots, N_\alpha$, such that each $X_{\alpha(k)}^i$
for $k=1,\dots, k_\alpha$ is correlated to each other one, and with the appropriate marginal.
Denote this approximation by 
$$
\Delta \mathbb{E}^{N_\alpha}_{\alpha}[\varphi_{\alpha}(X_{\alpha})] := 
\frac{1}{N_\alpha} \sum_{i=1}^{N_\alpha} (\Delta \varphi_\alpha)(X^i_{\alpha(1)},\dots, X^i_{\alpha(k_\alpha)}) \, .
$$

Following the MLMC analysis, the mean square error (MSE) 
of the MIMC estimator is decomposed as
\begin{equation}
\begin{split}
& \bbE\left [ \left (\sum_{\alpha\in\mathcal{I}} \Delta \mathbb{E}^{N_\alpha}_{\alpha}[\varphi_{\alpha}(X_{\alpha})] 
- \mathbb{E}[\varphi(X)] \right )^2 \right ] = \\
& \underbrace{\bbE\left [ \left (\sum_{\alpha\in\mathcal{I}} 
( \Delta \mathbb{E}^{N_\alpha}_{\alpha}[\varphi_{\alpha}(X_{\alpha})] 
- \Delta \mathbb{E}_{\alpha}[\varphi_{\alpha}(X_{\alpha})] ) \right )^2 \right ] }_{variance}
+ \underbrace{\left (  \sum_{\alpha\notin\mathcal{I}} 
\Delta \mathbb{E}_{\alpha}[\varphi_{\alpha}(X_{\alpha})] \right )^2}_{bias^2},
\end{split}\label{eq:mimse}
\end{equation}
where \eqref{eq:collapse} and \eqref{eq:mimc} were used.

The following assumptions are made in \cite{mimc}.
\begin{ass}[MIMC Assumptions]
\label{ass:mimc1} 
There is some $C>0$ and there are some $w_i, \beta_i, \gamma_i >0$ 
for $i=1,\dots, d$, such that the following estimates hold
\begin{itemize}
\item[{\rm (a)}] $\left|\Delta \mathbb{E}_{\alpha}[\varphi_{\alpha}(X_{\alpha})]\right| 
=: B_\alpha \leq C \prod_{i=1}^d 2^{-w_i\alpha_i}$;
\item[{\rm (b)}] $\bbE\left [ \left ( 
\Delta \mathbb{E}^{N_\alpha}_{\alpha}[\varphi_{\alpha}(X_{\alpha})] 
- \Delta \mathbb{E}_{\alpha}[\varphi_{\alpha}(X_{\alpha})]\right )^2 \right ] 
=: N_\alpha^{-1}V_\alpha \leq C N_\alpha^{-1}\prod_{i=1}^d 2^{-\beta_i\alpha_i}$;
\item[{\rm (c)}] {\rm Cost}$(X_{\alpha}) =: C_\alpha \leq C \prod_{i=1}^d 2^{\gamma_i \alpha_i}.$
\end{itemize}
\end{ass}

In the present work we will constrain our attention to
\begin{equation}\label{eq:tp}
\mathcal{I}_{ L_1:L_d}  := \{\alpha\in\mathbb{N}_0^d:\alpha_1\in\{0,\dots,L_1\},\dots,\alpha_d\in\{0,\dots,L_d\}\} \ . 
\end{equation}
\noindent
Define $\mathcal{A}(\alpha^*) = \{\alpha\in  \mathbb{N}_0^d ; \alpha_j \geq \alpha^*_j,~ {\rm for~at~least~one}~ j =1,\dots, d\}$.
Observe that 
\begin{equation}\label{eq:biassum}
\sum_{\alpha\notin\mathcal{I}_{ L_1:L_d}} 
\Delta \mathbb{E}_{\alpha}[\varphi_{\alpha}(X_{\alpha})]
\lesssim 
\sum_{\alpha \in \mathcal{A}(L_1:L_d)} \prod_{i=1}^d 2^{-w_i\alpha_i}
\le  \sum_{i=1}^d 2^{-w_iL_i}. 
\end{equation}
This is also consistent with a triangle-inequality estimate
of the bias from 
$$
\mathbb{E}_{(L_1:L_d)}[\varphi_{(L_1:L_d)}(X_{(L_1:L_d)})] 
= \sum_{\alpha\in\mathcal{I}_{ L_1:L_d}} 
\Delta \mathbb{E}_{\alpha}[\varphi_{\alpha}(X_{\alpha})] \ ,
$$
under the reasonable assumption that Assumption \ref{ass:mimc1}
arises from individual estimates of the form
$\Delta_i \mathbb{E}_{\alpha}[\varphi_{\alpha}(X_{\alpha})] = \cO(2^{-w_i\alpha_i})$,
coupled with mixed regularity conditions.

Now, suppose we aim to satisfy an MSE bound of $\cO(\varepsilon^2)$.

\begin{prop}[MIMC cost]\label{pro:mimc}
Given Assumption \ref{ass:mimc1}, with $\beta_i>\gamma_i$, for all $i=1,\dots, d$,
and assuming 
$\sum_{i=1}^d \gamma_i/w_i 
\leq 2$,
it is possible to identify $(L_1,\dots,L_d)$
and $\{ N_\alpha \}_{\alpha \in \cI_{ L_1:L_d}}$ such that for $C>0$
$$
\bbE\left [ \left (\sum_{\alpha\in\mathcal{I}_{ L_1:L_d}} \Delta \mathbb{E}^{N_\alpha}_{\alpha}[\varphi_{\alpha}(X_{\alpha})] 
- \mathbb{E}[\varphi(X)] \right )^2 \right ] \leq C \varepsilon^2,
$$
for a cost of $\cO(\varepsilon^{-2})$.
\end{prop}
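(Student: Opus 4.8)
The plan is to bound the two terms in the mean-square-error decomposition \eqref{eq:mimse} separately: first fix the multi-index truncation $(L_1,\dots,L_d)$ so that the squared bias is $\mathcal{O}(\varepsilon^2)$, and then choose the sample sizes $\{N_\alpha\}$ so that the variance is $\mathcal{O}(\varepsilon^2)$ at minimal cost. For the bias I would use the estimate \eqref{eq:biassum}, which gives
\[
\Big(\sum_{\alpha\notin\mathcal{I}_{L_1:L_d}}\Delta\mathbb{E}_\alpha[\varphi_\alpha(X_\alpha)]\Big)^2 \lesssim \Big(\sum_{i=1}^d 2^{-w_iL_i}\Big)^2.
\]
Setting $L_i = \big\lceil \tfrac{1}{w_i}\log_2(1/\varepsilon)\big\rceil$ forces $2^{-w_iL_i}\le\varepsilon$ for every $i$, and since $d$ is fixed the squared bias is then $\mathcal{O}(\varepsilon^2)$.

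For the variance, since the samples driving distinct summands are drawn independently, the variance term in \eqref{eq:mimse} equals $\sum_{\alpha\in\mathcal{I}_{L_1:L_d}}N_\alpha^{-1}V_\alpha$, with $V_\alpha\le C\prod_i 2^{-\beta_i\alpha_i}$ by Assumption \ref{ass:mimc1}(b). I would minimise the total cost $\sum_\alpha N_\alpha C_\alpha$ subject to the constraint $\sum_\alpha N_\alpha^{-1}V_\alpha\le\varepsilon^2$; a Lagrange-multiplier (equivalently Cauchy--Schwarz) argument yields the optimal allocation $N_\alpha \propto \varepsilon^{-2}\sqrt{V_\alpha/C_\alpha}\,\sum_{\alpha'\in\mathcal{I}_{L_1:L_d}}\sqrt{V_{\alpha'}C_{\alpha'}}$ and a leading cost
\[
\sum_{\alpha\in\mathcal{I}_{L_1:L_d}}N_\alpha C_\alpha \lesssim \varepsilon^{-2}\Big(\sum_{\alpha\in\mathcal{I}_{L_1:L_d}}\sqrt{V_\alpha C_\alpha}\Big)^2.
\]
Combining Assumption \ref{ass:mimc1}(b) and (c) gives $\sqrt{V_\alpha C_\alpha}\le C\prod_i 2^{-(\beta_i-\gamma_i)\alpha_i/2}$, and because $\beta_i>\gamma_i$ each geometric factor sums to a finite constant; hence $\sum_{\alpha\in\mathbb{N}_0^d}\sqrt{V_\alpha C_\alpha}=\mathcal{O}(1)$ uniformly in $\varepsilon$, which bounds this leading cost by $\mathcal{O}(\varepsilon^{-2})$.

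The subtlety, and the step I expect to be the main obstacle, is the overhead created by rounding: each $N_\alpha$ must be at least one, so the true cost also contains the additive term $\sum_{\alpha\in\mathcal{I}_{L_1:L_d}}C_\alpha$. Since $\gamma_i>0$, this geometric sum is dominated by its top index, giving $\sum_\alpha C_\alpha \lesssim \prod_i 2^{\gamma_iL_i}=\varepsilon^{-\sum_i \gamma_i/w_i}$. This is precisely where the hypothesis $\sum_{i=1}^d \gamma_i/w_i\le 2$ enters: it guarantees the overhead is itself $\mathcal{O}(\varepsilon^{-2})$ and therefore does not dominate the optimal cost. Collecting the two contributions yields a total cost of $\mathcal{O}(\varepsilon^{-2})$ for an $\mathcal{O}(\varepsilon^2)$ MSE.

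The two hypotheses thus play complementary roles: $\beta_i>\gamma_i$ makes the variance--cost product summable over the whole lattice (avoiding the logarithmic penalties familiar from MLMC, which is the source of the MIMC improvement), while $\sum_i\gamma_i/w_i\le 2$ controls the number-of-indices overhead induced by the ceilings. The delicate part of the write-up is the bookkeeping that keeps both cost contributions below $\varepsilon^{-2}$ simultaneously, together with verifying that the chosen $\{L_i\}$ and $\{N_\alpha\}$ are mutually consistent with both the bias and variance budgets.
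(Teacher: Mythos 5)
Your proposal is correct and follows essentially the same route as the paper: the bias is controlled by choosing $L_i \asymp |\log \varepsilon|/w_i$ via \eqref{eq:biassum}, the $N_\alpha$ are set by the standard constrained optimisation giving cost $\mathcal{O}(\varepsilon^{-2}K_{\mathcal{I}}^2)$ with $K_{\mathcal{I}}$ bounded because $\beta_i>\gamma_i$, and the condition $\sum_i\gamma_i/w_i\le 2$ handles the overhead from forcing $N_\alpha\ge 1$. Your write-up is in fact slightly more explicit than the paper's (which delegates the sample-size optimisation to the MLMC references), but the argument is the same.
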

\begin{proof}
Following from \eqref{eq:biassum}, the condition $L_j=|\log(\varepsilon/d)|/w_j$,
for all $j=1,\dots, d$, is sufficient to control the bias term in \eqref{eq:mimse}.
Given $\cI$, in this case constrained to be of the form $\mathcal{I}_{ L_1:L_d}$,
the $N_\alpha$ are optimized in the same way as MLMC so that 
$N_\alpha = \lceil(\varepsilon^{-2} K_\cI (V_\alpha/C_\alpha)^{1/2}\rceil$,
 where $K_\cI = \sum_{\alpha\in \cI} (V_\alpha C_\alpha)^{1/2}$ and $\lceil \cdot \rceil$
 denotes the integer ceiling of a non-integer, ensuring $N_\alpha\geq1$.
The cost is $\cO(\varepsilon^{-2}K_\cI^2)$.
See \cite{giles, ours, mlpf} for details.
For $\mathcal{I}_{ L_1:L_d}$ from \eqref{eq:tp} we have 
\begin{equation}\label{eq:tpcost}
K_\cI = \sum_{\alpha\in \cI} (V_\alpha C_\alpha)^{1/2} \le C\sum_{\alpha\in\cI}\prod_{i=1}^d 2^{\alpha_i(\gamma_i-\beta_i)/2}= 
C\prod_{i=1}^d \sum_{\alpha_i=1}^{L_i} 2^{\alpha_i(\gamma_i-\beta_i)/2}.
\end{equation}

Notice that $C_{(L_1:L_d)} \propto \varepsilon^{-\sum_{i=1}^d \gamma_i/w_i}$.
The constraint that 
$\sum_{i=1}^d \gamma_i/w_i 
\leq 2$ 
ensures that 
$C_{(L_1:L_d)} \lesssim \varepsilon^{-2}$, 
so the cost is dominated by $\varepsilon^{-2}$,
even if the theoretically optimal $N_{(L_1:L_d)}$
falls below 1, so that $N_{(L_1:L_d)}=1$.
\end{proof}

Notice that as usual the asymptotic relationship Cost$(\varepsilon)$ 
is determined by the signs of $\gamma_i-\beta_i$ for $i=1,\dots,d$.  
The proposition above shows that if $\beta_i>\gamma_i$ for all $i$, 
then one obtains the optimal {\it dimension-independent} 
cost of $\cO(\varepsilon^{-2})$.  
The other cases follow similarly from the relationship \eqref{eq:tpcost}.
The general case is considered in \cite{mimc}.  
If $\sum_{i=1}^d \gamma_i/w_i > 2$ 
then the theoretically optimal $N_{(L_1:L_d)}<1$,
and furthermore when we set $N_{(L_1:L_d)}=1$
then the cost will be dominated by $C_{(L_1:L_d)}$.

\begin{rem}[Choice of index set]
It is shown in \cite{mimc} that in fact it can be preferable 
to consider more complex index sets $\mathcal{I}$ 
than the tensor product one considered here, such
as $\cI_{\delta,L} =
\{\alpha \in  \mathbb{N}_0^d ; \alpha \cdot \delta \leq L, \delta \in (0,1]^d, \sum_{i=1}^d \delta_i =1 \}$. 
Indeed for any convex set $\cI \subseteq \cI_{ L_1:L_d} \subset \mathbb{N}_0^d$
other than $\cI_{ L_1:L_d}$, the bias will be larger, 
including more terms associated to the missing terms in the collapsing sum approximation
of $\mathbb{E}_{(L_1:L_d)}[\varphi_{(L_1:L_d)}(X_{(L_1:L_d)})]$.
However, each term left out saves a certain cost.  
Convexity ensures more expensive and smaller bias terms are excluded.
Since the present work is concerned with proof of principle,
this enhancement is left to future work.
\end{rem}

\section{Approach}
\label{sec:approach}

We consider \eqref{eq:mimc} and a given summand for $\alpha\in\mathcal{I}_{L_1:L_d}$. 
We suppose that there are
$1<k_{\alpha}\leq 2^d$ probability measures for which one wants to compute an expectation 
(in the case that there is only 1, one can use an ordinary Monte Carlo/ MCMC method 
to compute the expectation). 
These $k_{\alpha}$ probability measures induce $k_{\alpha}'=k_\alpha/2$ 
differences in \eqref{eq:mimc}.  
Our approach will estimate each summand of \eqref{eq:mimc}  independently.

For simplicity of notation we will write the associated random variables and indices $X_{\alpha(1)},\dots,X_{\alpha(k_{\alpha})}$.
The convention of the labelling is such that, writing $\alpha(i)_j$ as the $j^{th}-$element of $\alpha(i)$, 
$\sum_{j=1}^d[\alpha(2i)-\alpha(2i-1)]_j=1$ for each $i\in\{1,\dots,k_{\alpha}'\}$, and 
$\sum_{j=1}^d[\alpha(i)-\alpha(i-1)]_j2^{j-1}\geq 0$ for each $i\in\{2,\dots,k_{\alpha}\}$.
That is, $X_{\alpha(k_{\alpha})}$ (=$X_{\alpha}$)
is the most expensive random variable and $X_{\alpha(1)}$ (if $k_{\alpha}=2^d$, it is $X_{\alpha-\sum_{j=1}^d e_j}$) the cheapest random variable.
We suppose that it is possible to construct a dependent coupling of the prior 
$Q_\alpha$ on 
$\bigotimes_{k=1}^{k_{\alpha}} \mathsf{X}_{\alpha(k)} := \mathsf{X}_{\alpha(1)} \times \dots \times \mathsf{X}_{\alpha(k_\alpha)}$,
i.e.~that for $A_i\in\mathcal{X}_{\alpha(i)}$ and $i\in\{1,\dots,k_{\alpha}\}$
$$
\int_{\mathsf{X}_{\alpha(1)}\times\cdots\times\mathsf{X}_{\alpha(i-1)}\times A_i\times 
\mathsf{X}_{\alpha(i+1)}\times\cdots\times\mathsf{X}_{\alpha(k_{\alpha})}} Q_\alpha(\dif (x_{\alpha(1)},\dots,x_{\alpha({k_{\alpha}})})) = 
p_{\alpha(i)}(A_i).
$$
Expectations and variances w.r.t.~$Q_\alpha$ are written $\mathbb{E}_{Q_\alpha}$ and 
$\mathbb{V}\textrm{ar}_{Q_\alpha}$.
This is possible in some SPDE contexts (e.g.~\cite{spde_disc}). 
Let $G:\mathbb{N}_0^d\times \bigotimes_{k=1}^{k_{\alpha}} \mathsf{X}_{\alpha(k)}\rightarrow (0,\infty)$. 
We propose to sample from the {\it approximate coupling} 
$$
\Pi_{\alpha}(\dif (x_{\alpha(1)},\dots,x_{\alpha(k_{\alpha})})) \propto G_{\alpha}(x_{\alpha(1)},\dots,x_{\alpha(k_{\alpha})}) Q_\alpha(\dif (x_{\alpha(1)},\dots,x_{\alpha(k_{\alpha})})).
$$
Expectations w.r.t.~this probability measure are written $\mathbb{E}_{\Pi_{\alpha}}$.
One sensible choice of $G_{\alpha}(x_{\alpha(1)},\dots,x_{\alpha(k_{\alpha})})$, 
and the one which is assumed henceforth, is
$$
G_{\alpha}(x_{\alpha(1)},\dots,x_{\alpha(k_{\alpha})}) = \max\{g_{\alpha(1)}(y|x_{\alpha(1)}),\dots, g_{\alpha(k_{\alpha})}(y|x_{\alpha(k_{\alpha})})\}.
$$
This ensures that the variance of the approach to be introduced is upper-bounded by a finite constant.
Then for any $\alpha(i)$, $i\in\{1,\dots,k_{\alpha}\}$,
\begin{eqnarray}
\mathbb{E}_{\alpha(i)}[\varphi_{\alpha(i)}(X_{\alpha(i)})] & = &
\mathbb{E}_{\Pi_{\alpha}}\Big[\varphi_{\alpha(i)}(X_{\alpha(i)})\frac{g_{\alpha(i)}(y|X_{\alpha(i)})}{G_{\alpha}(X_{\alpha(1)},\dots,X_{\alpha(k_{\alpha})}) }\Big]\Bigg/ \nonumber\\ 
& &
\mathbb{E}_{\Pi_{\alpha}}\Big[
\frac{g_{\alpha(i)}(y|X_{\alpha(i)})}{G_{\alpha}(X_{\alpha(1)},\dots,X_{\alpha(k_{\alpha})}) }\Big] \, .
\label{eq:main_id}
\end{eqnarray}
To ease the subsequent notations, set for any $\alpha(i)$, $i\in\{1,\dots,k_{\alpha}\}$, 
\begin{equation}\label{eq:h}
H_{i,\alpha}(x_{\alpha(1)},\dots,x_{\alpha(k_{\alpha})}) = \frac{g_{\alpha(i)}(y|x_{\alpha(i)})}{G_{\alpha}(x_{\alpha(1)},\dots,x_{\alpha(k_{\alpha})}) } \, .
\end{equation}

\subsection{Method and Analysis}

Let $k_{\alpha}=2^d$, and $k_{\alpha}'=2^{d-1}$.
Now, to approximate the summand in \eqref{eq:mimc}, we have
$$
\Delta \mathbb{E}_{\alpha}[\varphi_{\alpha}(X_{\alpha})] = \sum_{i=1}^{k_{\alpha}'}
(-1)^{|\alpha(k_{\alpha})-\alpha(2i)|}
\{\mathbb{E}_{\alpha(2i)}[\varphi_{\alpha(2i)}(X_{\alpha(2i)})] - 
\mathbb{E}_{\alpha(2i-1)}[\varphi_{\alpha(2i-1)}(X_{\alpha(2i-1)})]
\} \, . 
$$
Then, we have that via \eqref{eq:main_id}
\begin{eqnarray*}
\Delta \mathbb{E}_{\alpha}[\varphi_{\alpha}(X_{\alpha})] & = &
\sum_{i=1}^{k_{\alpha}'}(-1)^{|\alpha(k_{\alpha})-\alpha(2i)|}
\Bigg\{
\frac{
\mathbb{E}_{\Pi_{\alpha}}[\varphi_{\alpha(2i)}(X_{\alpha(2i)})
H_{2i,\alpha}(X_{\alpha(1)},\dots,X_{\alpha(k_{\alpha})})]
}
{
\mathbb{E}_{\Pi_{\alpha}}[H_{2i,\alpha}(X_{\alpha(1)},\dots,X_{\alpha(k_{\alpha})})]
}
- \\ & & 
\frac{
\mathbb{E}_{\Pi_{\alpha}}[\varphi_{\alpha(2i-1)}(X_{\alpha(2i-1)})
H_{2i-1,\alpha}(X_{\alpha(1)},\dots,X_{\alpha(k_{\alpha})})]
}
{
\mathbb{E}_{\Pi_{\alpha}}[H_{2i-1,\alpha}(X_{\alpha(1)},\dots,X_{\alpha(k_{\alpha})})]
}
\Bigg\} \, ,
\end{eqnarray*}
where we recall that $H_{i,\alpha}$ is defined in \eqref{eq:h}.

This identity can be approximated by running an ergodic $\Pi_{\alpha}-$invariant Markov kernel $\mathbb{K}_\alpha$ on the 
space $(\mathsf{Z} = \bigotimes_{k=1}^{k_{\alpha}} \mathsf{X}_{\alpha(k)}$, $\mathcal{Z} = \bigvee_{k=1}^{k_{\alpha}} \mathcal{X}_{\alpha(k)})$.
Write the Markov chain run for $N-$steps as $\{X^j_{\alpha(1)},\dots,X^j_{\alpha(k_{\alpha})}\}_{1\leq j \leq N}$.
Then the approximation of $\Delta \mathbb{E}_{\alpha}[\varphi_{\alpha}(X_{\alpha})]$ is
\begin{eqnarray}\label{eq:mimcmc_full}
\sum_{i=1}^{k_{\alpha}'}(-1)^{|\alpha(k_{\alpha})-\alpha(2i)|}
\Bigg\{
\frac{
\frac{1}{N}\sum_{j=1}^N
\varphi_{\alpha(2i)}(x^j_{\alpha(2i)})
H_{2i,\alpha}(x^j_{\alpha(1)},\dots,x^j_{\alpha(k_{\alpha})})
}
{
\frac{1}{N}\sum_{j=1}^N
H_{2i,\alpha}(x^j_{\alpha(1)},\dots,x^j_{\alpha(k_{\alpha})})
} -
\\
\nonumber
\frac{
\frac{1}{N}\sum_{j=1}^N
\varphi_{\alpha(2i-1)}(x^j_{\alpha(2i-1)})
H_{2i-1,\alpha}(x^j_{\alpha(1)},\dots,x^j_{\alpha(k_{\alpha})})
}
{
\frac{1}{N}\sum_{j=1}^N
H_{2i-1,\alpha}(x^j_{\alpha(1)},\dots,x^j_{\alpha(k_{\alpha})})
} 
\Bigg\} \, .
\end{eqnarray}

We now give a result on the variance of this approach. 
However, this is for the simplified estimator
\begin{equation}\label{eq:vp_def}
\widehat{\varphi}^N_\alpha := \sum_{i=1}^{k_{\alpha}'}(-1)^{|\alpha(k_{\alpha})-\alpha(2i)|}
\Bigg\{
\frac{
\frac{1}{N}\sum_{j=1}^N
\varphi_{\alpha(2i)}(x^j_{\alpha(2i)})
H_{2i,\alpha}(x^j_{\alpha(1)},\dots,x^j_{\alpha(k_{\alpha})})
}
{
\mathbb{E}_{\Pi_{\alpha}}[H_{2i,\alpha}(X_{\alpha(1)},\dots,X_{\alpha(k_{\alpha})})]
} -
\end{equation}
$$
\frac{
\frac{1}{N}\sum_{j=1}^N
\varphi_{\alpha(2i-1)}(x^j_{\alpha(2i-1)})
H_{2i-1,\alpha}(x^j_{\alpha(1)},\dots,x^j_{\alpha(k_{\alpha})})
}
{
\mathbb{E}_{\Pi_{\alpha}}[H_{2i-1,\alpha}(X_{\alpha(1)},\dots,X_{\alpha(k_{\alpha})})]
} 
\Bigg\}.
$$
The analysis of this estimator is non-trivial, but significantly more straightforward than the one implemented, which
is left for future work. We believe the same result to hold for the estimate used in practice \eqref{eq:mimcmc_full}. 
The challenge for the implemented estimator \eqref{eq:mimcmc_full} is associated to treating 
differences of differences for self-normalized estimators, which
does not appear to exist yet in the literature.
A bounded function on $\mathsf{Z}$ means a function that is uniformly upper bounded, i.e.~the upper-bound does not depend on $\alpha$ (although it may depend on $d$).

\begin{ass}[MIMCMC Assumptions]\nonumber
We assume the following:
\begin{hypA}
\label{hyp:A}
For every $y\in\mathsf{Y}$
there exist $0< \underline{C}< \overline{C}<+\infty$
such that for every $\alpha(i)$, $i\in\{1,\dots,k_{\alpha}\}$, 
$x\in\mathsf{X}_{\alpha(i)}$, 
$$
\underline{C} \leq g_{\alpha(i)}(y|x) \leq \overline{C}.
$$
\end{hypA}
\begin{hypA}
\label{hyp:B}
For $\varphi:\mathsf{X}\rightarrow\mathbb{R}$ bounded,
and $f:\mathbb{N}_0^d\times \bigotimes_{k=1}^{k_{\alpha}} \mathsf{X}_{\alpha(k)}\rightarrow \mathbb{R}$ 
bounded, we have
$$
\left|\mathbb{E}_{Q_\alpha}\Big[
f_{\alpha}(X_{\alpha(1)},\dots,X_{\alpha(k_{\alpha})})\Big\{
\sum_{i=1}^{k_{\alpha}'}(-1)^{|\alpha(k_{\alpha})-\alpha(2i)|}\{
\varphi_{\alpha(2i)}(X_{\alpha(2i)}) - 
\varphi_{\alpha(2i-1)}(X_{\alpha(2i-1)})\}\Big\}
\Big]\right|
$$
$$
\leq C_1(\alpha)
$$
with $\lim_{\min_{1\leq i\leq d}\alpha_i \rightarrow+\infty}C_1(\alpha)=0$.
For $\varphi:\mathsf{X}\rightarrow\mathbb{R}$ bounded, we have
$$
\mathbb{V}\textrm{ar}_{Q_\alpha}\Big[
\sum_{i=1}^{k_{\alpha}'}(-1)^{|\alpha(k_{\alpha})-\alpha(2i)|}\{
\varphi_{\alpha(2i)}(X_{\alpha(2i)}) - 
\varphi_{\alpha(2i-1)}(X_{\alpha(2i-1)})\}
\Big] \leq 
C_2(\alpha)^2
$$
with $\lim_{\min_{1\leq i\leq d}\alpha_i \rightarrow+\infty}C_2(\alpha)=0$.
\end{hypA}
\begin{hypA}
\label{hyp:C}
There exist a $\xi\in(0,1)$ and a probability measure $\nu_\alpha$ on $(\mathsf{Z},\mathcal{Z})$ for every $\alpha$ such that 
$$
\mathbb{K}_\alpha(z,A)  \geq \xi \nu_\alpha(A)\ (z\in \mathsf{Z}, A\in\mathcal{Z}). 
$$
$\mathbb{K}_\alpha$ is $\Pi_{\alpha}-$reversible. 
\end{hypA}
\end{ass}

Let $D(\alpha) = 
\max\{
C_1^2,C_2^2,C_1C_2\},$ with $C_i(\alpha)$ 
given as above for $i=1,2$.
Set $\mathbb{E}$ as the expectation w.r.t.~the law of the simulated Markov chain. 
We have the following result.


\begin{prop}[Main result]
\label{prop:main_res}
Assume (A\ref{hyp:A}-\ref{hyp:C}). Then 
there exist a $C<+\infty$ independent of $\alpha$ such that
$$
\mathbb{E}\Bigg[\Bigg(
\sum_{i=1}^{k_{\alpha}'}(-1)^{|\alpha(k_{\alpha})-\alpha(2i)|}
\Bigg\{
\frac{
\frac{1}{N}\sum_{j=1}^N
\varphi_{\alpha(2i)}(x^j_{\alpha(2i)})
H_{2i,\alpha}(x^j_{\alpha(1)},\dots,x^j_{\alpha(k_{\alpha})})
}
{
\mathbb{E}_{\Pi_{\alpha}}[H_{2i,\alpha}(X_{\alpha(1)},\dots,X_{\alpha(k_{\alpha})})]
} -
$$
$$
\frac{
\frac{1}{N}\sum_{j=1}^N
\varphi_{\alpha(2i-1)}(x^j_{\alpha(2i-1)})
H_{2i-1,\alpha}(x^j_{\alpha(1)},\dots,x^j_{\alpha(k_{\alpha})})
}
{
\mathbb{E}_{\Pi_{\alpha}}[H_{2i-1,\alpha}(X_{\alpha(1)},\dots,X_{\alpha(k_{\alpha})})]
} 
\Bigg\}
- \Delta \mathbb{E}_{\alpha}[\varphi_{\alpha}(X_{\alpha})]\Bigg)^2
\Big]
%
\leq 
\frac{C D(\alpha)}{N}.
$$
\end{prop}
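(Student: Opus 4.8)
The plan is to recognise the simplified estimator \eqref{eq:vp_def} as an ordinary ergodic average and then to isolate the single genuinely hard step, a variance bound for the associated test function. First I would set $z:=(x_{\alpha(1)},\dots,x_{\alpha(k_\alpha)})$ and define the bounded function
\[
\psi_\alpha(z) := \sum_{i=1}^{k_\alpha'}(-1)^{|\alpha(k_\alpha)-\alpha(2i)|}\left\{\frac{\varphi_{\alpha(2i)}(x_{\alpha(2i)})H_{2i,\alpha}(z)}{\mathbb{E}_{\Pi_\alpha}[H_{2i,\alpha}]}-\frac{\varphi_{\alpha(2i-1)}(x_{\alpha(2i-1)})H_{2i-1,\alpha}(z)}{\mathbb{E}_{\Pi_\alpha}[H_{2i-1,\alpha}]}\right\},
\]
so that $\widehat{\varphi}^N_\alpha=\tfrac1N\sum_{j=1}^N\psi_\alpha(z^j)$. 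By the identity \eqref{eq:main_id}, each ratio $\mathbb{E}_{\Pi_\alpha}[\varphi_{\alpha(i)}H_{i,\alpha}]/\mathbb{E}_{\Pi_\alpha}[H_{i,\alpha}]$ equals $\mathbb{E}_{\alpha(i)}[\varphi_{\alpha(i)}(X_{\alpha(i)})]$, whence $\mathbb{E}_{\Pi_\alpha}[\psi_\alpha]=\Delta\mathbb{E}_\alpha[\varphi_\alpha(X_\alpha)]$. The quantity to be bounded is therefore exactly the mean square error of a $\Pi_\alpha$-ergodic average about its own stationary mean. Invoking the technical result of the appendix --- which turns the one-step minorisation and $\Pi_\alpha$-reversibility of (A\ref{hyp:C}) into a spectral gap bounded below by $\xi$, uniformly in $\alpha$ --- I would bound this mean square error by $C_\xi N^{-1}\mathbb{V}\mathrm{ar}_{\Pi_\alpha}(\psi_\alpha)$ with $C_\xi$ independent of $\alpha$. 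The proposition then reduces to showing $\mathbb{V}\mathrm{ar}_{\Pi_\alpha}(\psi_\alpha)\le C D(\alpha)$.

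Next I would remove the target measure in favour of the coupling $Q_\alpha$. Writing $\Pi_\alpha=G_\alpha Q_\alpha/Z_G$ with $Z_G:=\mathbb{E}_{Q_\alpha}[G_\alpha]$, and using $H_{i,\alpha}=g_{\alpha(i)}/G_\alpha$ together with $\mathbb{E}_{\Pi_\alpha}[H_{i,\alpha}]=Z_i/Z_G$ where $Z_i:=\mathbb{E}_{Q_\alpha}[g_{\alpha(i)}]$, the factors $G_\alpha$ cancel and one obtains the clean representation $\psi_\alpha=(Z_G/G_\alpha)\,\widetilde{\Phi}_\alpha$, where
\[
\widetilde{\Phi}_\alpha:=\sum_{i=1}^{k_\alpha'}(-1)^{|\alpha(k_\alpha)-\alpha(2i)|}\big(\varphi_{\alpha(2i)}\widetilde{g}_{\alpha(2i)}-\varphi_{\alpha(2i-1)}\widetilde{g}_{\alpha(2i-1)}\big),\qquad \widetilde{g}_{\alpha(i)}:=g_{\alpha(i)}/Z_i .
\]
Assumption (A\ref{hyp:A}) forces $G_\alpha,Z_G,Z_i\in[\underline{C},\overline{C}]$, so $\mathbb{V}\mathrm{ar}_{\Pi_\alpha}(\psi_\alpha)\le\mathbb{E}_{\Pi_\alpha}[\psi_\alpha^2]=Z_G\,\mathbb{E}_{Q_\alpha}[\widetilde{\Phi}_\alpha^2/G_\alpha]\le(\overline{C}/\underline{C})\,\mathbb{E}_{Q_\alpha}[\widetilde{\Phi}_\alpha^2]$, and the task becomes the single estimate $\mathbb{E}_{Q_\alpha}[\widetilde{\Phi}_\alpha^2]\le C D(\alpha)$.

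The core of the argument is to see $\widetilde{\Phi}_\alpha$ as the full $d$-fold difference $\Delta$ applied to the importance-weighted integrand $\varphi\,\widetilde{g}=h/Z$, where $h:=\varphi g$ is bounded by (A\ref{hyp:A}). Splitting $\widetilde{\Phi}_\alpha=Z_*^{-1}\Phi^h_\alpha+Z_*^{-1}R_\alpha$ against a fixed reference normaliser $Z_*$, the main term $\Phi^h_\alpha$ is the $d$-fold difference of the genuine bounded function $h$: applying the two parts of (A\ref{hyp:B}) to $h$ gives $|\mathbb{E}_{Q_\alpha}\Phi^h_\alpha|\le C_1(\alpha)$ and $\mathbb{V}\mathrm{ar}_{Q_\alpha}(\Phi^h_\alpha)\le C_2(\alpha)^2$, hence $\mathbb{E}_{Q_\alpha}[(\Phi^h_\alpha)^2]\le C_1^2+C_2^2$. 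The remainder $R_\alpha$ collects the corner-dependent normaliser mismatches $Z_i^{-1}-Z_*^{-1}$; expanding $\Delta[h/Z]$ by the discrete Leibniz rule it is a sum, over coordinate subsets, of partial differences of $h$ multiplied by partial differences of the scalar normaliser sequence, the latter being expectations such as $\mathbb{E}_{Q_\alpha}[\Phi^g_\alpha]=\Delta[Z]$ (and their lower-dimensional analogues) that are controlled through (A\ref{hyp:B}) applied to $g$. Because every such product carries exactly one difference per coordinate it should scale like the full-order constants, and a Cauchy--Schwarz pairing of the $C_1$-controlled normaliser factors against the $C_2$-controlled function-difference factor produces the contributions $C_1^2$, $C_1 C_2$ and $C_2^2$. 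Collecting these, $\mathbb{E}_{Q_\alpha}[\widetilde{\Phi}_\alpha^2]\le C\big(C_1^2+C_1C_2+C_2^2\big)\le 3C\,D(\alpha)$, which closes the estimate.

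The hard part is this last step, specifically the treatment of the normaliser-mismatch remainder $R_\alpha$. Term by term the corrections $Z_i^{-1}-Z_*^{-1}$ are only first-order small, so the bound genuinely relies on the sign cancellation in $\Delta$ surviving the Leibniz expansion and on matching each partial difference of $h$ with a complementary partial difference of the normalisers; making this rigorous requires the mixed, sub-dimensional regularity implicit in (A\ref{hyp:B}), and it is precisely this pairing that forces the cross term $C_1 C_2$ into $D(\alpha)$. A secondary, more routine obstacle is the appendix estimate itself: converting the uniform minorisation and reversibility of (A\ref{hyp:C}) into an $O(N^{-1}\mathbb{V}\mathrm{ar}_{\Pi_\alpha}(\psi_\alpha))$ bound with a constant independent of $\alpha$ and valid from a non-stationary start, for which the geometric decay of the autocovariances of a reversible, uniformly ergodic chain is the key ingredient.
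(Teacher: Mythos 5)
Your proposal is correct in its overall architecture and matches the paper's (very terse) proof at the top level: reduce the mean square error of the ergodic average to $\cO(N^{-1})$ times the stationary variance of the test function, using the uniform minorisation and reversibility of (A\ref{hyp:C}) exactly as in \cite[Theorem 3.1]{jasra}, and then bound that stationary variance by $C D(\alpha)$ via a change of measure from $\Pi_{\alpha}$ to $Q_\alpha$ together with (A\ref{hyp:A})--(A\ref{hyp:B}); the latter bound is precisely the content of the paper's Lemma \ref{lem:tech_res}. Where you diverge is in how that variance bound is organised, and the difference matters. You bound the second moment of the normalised function $\psi_\alpha$, which forces you to confront the term-dependent normalisers $Z_i=\mathbb{E}_{Q_\alpha}[g_{\alpha(i)}(y|X_{\alpha(i)})]$ via a Leibniz expansion of $\Delta[h/Z]$, and you correctly flag this remainder $R_\alpha$ as the hard part: the individual mismatches $Z_i^{-1}-Z_*^{-1}$ are only first-order small, and pairing them with complementary partial increments of $h$ requires sub-dimensional (partial-increment) control that is \emph{not} literally contained in (A\ref{hyp:B}) as stated, so this step does not close from the stated assumptions alone. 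The paper's Lemma \ref{lem:tech_res} sidesteps exactly this issue by working with the \emph{unnormalised} increment $\tilde{\varphi}=G_\alpha^{-1}\sum_{i}(-1)^{|\alpha(k_\alpha)-\alpha(2i)|}\{\varphi_{\alpha(2i)}g_{\alpha(2i)}-\varphi_{\alpha(2i-1)}g_{\alpha(2i-1)}\}$, in which every term carries the \emph{common} weight $G_\alpha^{-1}$: the full $d$-fold increment of the single bounded function $\varphi g$ then appears intact, (A\ref{hyp:B}) applies verbatim with $f_\alpha=G_\alpha^{-1}$, and the three contributions $C_1^2$, $C_2^2$, $C_1C_2$ of $D(\alpha)$ arise instead from the decomposition $\mathbb{V}\textrm{ar}_{\Pi_{\alpha}}[\tilde{\varphi}]=Z_\alpha^{-1}(V_\alpha+B_\alpha^2 Z_\alpha+2B_\alpha F_\alpha)$, i.e.\ from the shift between the $Q_\alpha$- and $\Pi_{\alpha}$-means and its cross term, not from a normaliser expansion. (The residual passage from $\tilde{\varphi}$ to the normalised estimator is delegated to the cited argument of \cite{jasra}; the paper does not spell it out either.) If you reorganise your final step along these lines, your proof closes without the extra mixed regularity you were worried about.
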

\begin{proof}
The proof is essentially the same as that of \cite[Theorem 3.1]{jasra}, 
with the exception that Proposition A.1 needs to be augmented, which is done in the appendix. 
\end{proof}

\subsubsection{MIMC considerations}

Recall \eqref{eq:vp_def} and set
\begin{equation}\label{eq:mimcmcest}
\widehat{\varphi}_{\cI_{ L_1:L_d}}^{\rm MI} := \sum_{\alpha\in \cI_{ L_1:L_d}} \widehat{\varphi}^{N_\alpha}_\alpha .
\end{equation}

Consider the following assumptions
\begin{ass}[MIMCMC rates]\label{ass:mimc2} 
There is some $C>0$ and there are some $w_i, \beta_i, \gamma_i >0$ 
for $i=1,\dots, d$, such that the following estimates hold
\begin{itemize}
\item[{\rm (a)}] $\left|\Delta \mathbb{E}_{\alpha}[\varphi_{\alpha}(X_{\alpha})] \right|
\leq C \prod_{i=1}^d 2^{-w_i\alpha_i}$;
\item[{\rm (b)}] $D(\alpha(1),\dots,\alpha(k_\alpha)) \leq C \prod_{i=1}^d 2^{-\beta_i\alpha_i}$;
\item[{\rm (c)}] {\rm Cost}$(X_{\alpha}) \leq C \prod_{i=1}^d 2^{\gamma_i \alpha_i},$
\end{itemize}
where we recall $D(\alpha(1),\dots,\alpha(k_\alpha))$ appears in Proposition 
\ref{prop:main_res} and is defined above that.
\end{ass}



\begin{prop}[MIMCMC cost]\label{pro:mimcmc}
Given Assumption \ref{ass:mimc2}, with $\beta_i>\gamma_i$, for all $i=1,\dots, d$,
and assuming 
$\sum_{i=1}^d \gamma_i/w_i
\leq 2$,
it is possible to identify $(L_1,\dots,L_d)$
and $\{ N_\alpha \}_{\alpha \in \cI_{ L_1:L_d}}$ such that
$$
\bbE\left [ \left (\widehat{\varphi}_{\cI_{ L_1:L_d}}^{\rm MI} - \mathbb{E}[\varphi(X)] \right )^2 \right ] \leq C \varepsilon^2 \ ,
$$
for some $C>0$ and for a cost of $\cO(\varepsilon^{-2})$. 
\end{prop}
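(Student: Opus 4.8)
The plan is to reproduce the cost analysis of Proposition~\ref{pro:mimc} almost verbatim, with the per-index variance proxy $V_\alpha$ there replaced by $D(\alpha)$ here and with Proposition~\ref{prop:main_res} supplying the per-index error estimate in place of Assumption~\ref{ass:mimc1}(b). First I would record that, since each summand $\widehat{\varphi}^{N_\alpha}_\alpha$ of $\widehat{\varphi}_{\cI_{L_1:L_d}}^{\rm MI}$ in \eqref{eq:mimcmcest} is produced from its own independently simulated $\mathbb{K}_\alpha$-chain, the family $\{\widehat{\varphi}^{N_\alpha}_\alpha\}_{\alpha\in\cI_{L_1:L_d}}$ is mutually independent. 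Setting $Y_\alpha := \widehat{\varphi}^{N_\alpha}_\alpha - \Delta\bbE_\alpha[\varphi_\alpha(X_\alpha)]$ and using the collapsing identity \eqref{eq:collapse} to write $\sum_{\alpha\in\cI_{L_1:L_d}}\Delta\bbE_\alpha[\varphi_\alpha(X_\alpha)] - \bbE[\varphi(X)] = -\sum_{\alpha\notin\cI_{L_1:L_d}}\Delta\bbE_\alpha[\varphi_\alpha(X_\alpha)]$, I would then decompose the mean-square error into a variance part in the $Y_\alpha$ and a truncation-bias part exactly as in \eqref{eq:mimse}.

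The truncation-bias part is handled with no change: Assumption~\ref{ass:mimc2}(a) coincides with Assumption~\ref{ass:mimc1}(a), so \eqref{eq:biassum} applies as stated and the choice $L_j = |\log(\varepsilon/d)|/w_j$ drives it below $\cO(\varepsilon)$. For the stochastic part, Proposition~\ref{prop:main_res} yields $\bbE[Y_\alpha^2] \leq C D(\alpha)/N_\alpha$, which is precisely the role of $N_\alpha^{-1}V_\alpha$ in the MLMC/MIMC bound; by the independence above, the per-index contributions aggregate, so this part is controlled by $\sum_{\alpha\in\cI_{L_1:L_d}} D(\alpha)/N_\alpha$. Feeding in Assumption~\ref{ass:mimc2}(b)--(c), the allocation $\{N_\alpha\}$ is then optimized exactly as in Proposition~\ref{pro:mimc}: one takes $N_\alpha = \lceil \varepsilon^{-2} K_{\cI}(D(\alpha)/C_\alpha)^{1/2}\rceil$ with $K_{\cI} = \sum_{\alpha\in\cI_{L_1:L_d}}(D(\alpha)C_\alpha)^{1/2}$, for a cost of $\cO(\varepsilon^{-2}K_{\cI}^2)$. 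The hypothesis $\beta_i > \gamma_i$ turns each geometric factor in the analogue of \eqref{eq:tpcost} into a bounded sum, so $K_{\cI} = \cO(1)$, while $\sum_{i=1}^d \gamma_i/w_i \leq 2$ forces $C_{(L_1:L_d)} \lesssim \varepsilon^{-2}$ and hence keeps the total cost at $\cO(\varepsilon^{-2})$ even after rounding $N_{(L_1:L_d)}$ up to $1$.

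The one genuinely new ingredient, and the step I expect to demand the most care, is that the per-index MCMC estimator need not be exactly unbiased for $\Delta\bbE_\alpha[\varphi_\alpha(X_\alpha)]$, so the $Y_\alpha$ carry nonzero means and, by independence, the $L^2$ error splits into three pieces: $(\sum_{\alpha}\var(Y_\alpha))^{1/2}$, the aggregated per-index bias $|\sum_{\alpha}\bbE[Y_\alpha]|$, and the truncation bias $|\sum_{\alpha\notin\cI_{L_1:L_d}}\Delta\bbE_\alpha[\varphi_\alpha(X_\alpha)]|$. Two observations neutralize the middle term. First, for the simplified estimator \eqref{eq:vp_def} the denominators are the exact constants $\bbE_{\Pi_\alpha}[H_{i,\alpha}]$, so under stationary initialization each $\widehat{\varphi}^{N_\alpha}_\alpha$ is unbiased, the means $\bbE[Y_\alpha]$ vanish, and the decomposition collapses to the clean form \eqref{eq:mimse}. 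Second, and without assuming stationary starts, any residual bias is dominated by $\|Y_\alpha\|_2 \leq (CD(\alpha)/N_\alpha)^{1/2}$ from Proposition~\ref{prop:main_res}, where the minorization in Assumption~(A\ref{hyp:C}) makes $\mathbb{K}_\alpha$ uniformly ergodic with constants independent of $\alpha$; since $|\cI_{L_1:L_d}| = \prod_j(L_j+1) = \cO((\log\varepsilon^{-1})^d)$, the aggregated-bias contribution is at worst a polylogarithmic multiple of $\varepsilon^2$ and so does not disturb the $\cO(\varepsilon^{-2})$ cost. I would accordingly present the final count through this three-term triangle inequality, so that only the single per-index bound of Proposition~\ref{prop:main_res} and the truncation estimate \eqref{eq:biassum} drive the conclusion.
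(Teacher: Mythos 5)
Your proposal is correct and follows essentially the same route as the paper, whose proof simply substitutes Proposition \ref{prop:main_res} for Assumption \ref{ass:mimc1}(b) and reruns the cost calculation of Proposition \ref{pro:mimc}; your explicit treatment of the non-zero means $\bbE[Y_\alpha]$ is a refinement of a point the paper leaves implicit. One tightening is needed in that step: the crude count $|\cI_{L_1:L_d}|=\cO((\log\varepsilon^{-1})^{d})$ only gives an MSE of order $\varepsilon^{2}(\log\varepsilon^{-1})^{d}$, which is weaker than the claimed $C\varepsilon^{2}$, so you should instead bound the aggregated bias by
$\sum_{\alpha}\|Y_\alpha\|_{2}\le \varepsilon\, K_{\cI}^{-1/2}\sum_{\alpha}\bigl(D(\alpha)C_\alpha\bigr)^{1/4}=\cO(\varepsilon)$,
using that $\sum_{\alpha}\prod_{i=1}^{d}2^{\alpha_i(\gamma_i-\beta_i)/4}$ is bounded when $\beta_i>\gamma_i$ by the same geometric-product argument as \eqref{eq:tpcost} (or simply rely on your first observation, that the simplified estimator \eqref{eq:vp_def} with exact normalizers is unbiased under stationary initialization, which disposes of the term entirely).
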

\begin{proof}
Under the assumptions above, and following from Proposition \ref{prop:main_res},
the result follows in the same manner as Proposition \ref{pro:mimc}. 
\end{proof}

\begin{rem}[MLMCMC]
It is noted that in the case of a single discretized dimension 
the method presented constitutes a new 
Multilevel Markov chain Monte Carlo (MLMCMC) 
method,
which generalizes \cite{jasra}.  
Furthermore, in this case the proof of Proposition \ref{prop:main_res} 
goes through for the general estimator \eqref{eq:mimcmc_full}
rather than the simplified one with known normalization constants \eqref{eq:vp_def}.
There exist 2 other general MLMCMC methods in the literature.
The first \cite{hoang2013complexity} uses importance sampling to approximate 
the increments.
The second \cite{dodwell2015hierarchical} uses correlated MCMC kernels 
to couple the joint measures arising in the increments.  
The interesting question of which of these is the most efficient in a given 
circumstance is beyond the scope of the present work and is left to a future 
investigation.
\end{rem}

\section{Numerical Results for an SPDE}

\label{sec:numerics}

We consider as an example 
a linear SPDE with space-time white-noise forcing.

Consider the semi-linear stochastic heat equation with additive space-time white
noise on the one-dimensional domain $[0, 1]$ over the time interval $[0, T]$
with $T = 1$, i.e.,
\begin{equation}
  \frac{\partial u}{\partial t} =
  \frac{\partial^2 u}{\partial x^2} + \theta u + \sigma\dot{W}_t, 
\end{equation}
with the Dirichlet boundary condition and the initial value 
$u(x, 0) = u_0(x) = \sum_{k=1}^\infty u_{k,0} e_k(x)$,
for $x \in (0, 1)$.
Here $\dot{W}_t$ is space-time white noise, 
i.e. the time derivative of a cylindrical Brownian motion with identity covariance operator in space, 
$W_t = \sum_{k=1}^\infty w_{k,t} e_k(x)$, with $w_{k,t}$ i.i.d. scalar Brownian motions for each $k$, 
and $e_k(x)=\sqrt{2}\sin(k\pi x)$. 
In particular, the initial data is fixed as $u_{0,k}=1$ for all $k=1,\dots, K_{\max}$. 
This is a convenient example because the solution is given by an independent
collection of SDE for $k\in \bbN$, i.e.
$$
\dot{u}_k = (- \pi^2 k^2 + \theta ) u_k + \sigma \dot{w}_{k,t}.
$$
These SDE are analytically tractable, in as much as they are Gaussian.
In other words, the solution at time $t$ is given by 
$$
u_k(t) = e^{(-\pi^2 k^2 + \theta)t}u_{k,0} + N\left ( 0, \frac{\sigma^2(1-e^{2(\theta-\pi^2k^2)t})}{2(\pi^2k^2-\theta)}\right ),
$$ 
where the second term follows from Ito isometry.  
This will be useful as a benchmark for evaluating the mean square error of the 
approximations.

Pointwise observations of the process are obtained at times $t_j=j/T$ for $j=1,\dots,m$,
at $x = 1/3$ and $x = 2/3$.  
Since 
$u(x,t)=\sum_{k=1}^\infty u_k(t) e_k(x)$,
this ensures that the posterior distribution is nontrivial, 
in the sense that 
the observations involve all modes $\{u_k(t)\}_{k=1}^{K_{\max}}$ of the solution.
An additive Gaussian
observational noise with zero mean and variance $\tau^2 = 0.1$ is assumed.
The parameters 
are chosen as $\theta = 1/2$ and $\sigma = 1$. 

Define $\mathcal{G}(u) = \{[u(1/3,t_j),u(2/3,t_j)]^\top\}_{j=1}^{m}$,
such that the observations take the form $y\sim N(\mathcal{G}(u),\tau^2 I)$. 
Let $g(y|u) \propto \exp(-\frac1{2\tau^2}|y-\mathcal{G}(u)|^2)$.
The posterior is given by
\begin{equation}\label{eq:posterior}
\pi_\alpha(\dif u) \propto g(y| u) p_\alpha(\dif u),
\end{equation}
where the prior corresponds 
to the path measure of the SPDE above for $\alpha = \infty$, 
or its approximation at level $\alpha$, for a given set of parameters. 
{ The quantity of interest will be given by 
$\varphi(u) = \sum_{k=1}^\infty k^{-1} u_k(T) e_k(1/2)$.}

The exponential Euler scheme in \cite{spde_disc} 
will be used for 
discretization.  In other words, 
for a $K_\alpha$-mode approximation with time-resolution
$h_t=T/M_\alpha$, the solution, for $n=0,1,\dots,M_\alpha-1$, is given by
$$
u_{\alpha,k,n+1} = e^{-\pi^2k^2h_t}u_{\alpha,k,n} + 
\frac{1-e^{-\pi^2 k^2 h_t}}{\pi^2 k^2} \theta  u_{\alpha,k,n} +
\xi_{k,n}, \quad \xi_{k,n}\sim N\left ( 0, \frac{\sigma^2(1-e^{-2\pi^2k^2h_t})}{2\pi^2k^2}\right ).
$$
The quantity of interest for a multi-index $\alpha=(\alpha_x, \alpha_t)$ is given by 
$$
\varphi_\alpha(u_\alpha) =  \sum_{k=1}^{K_\alpha} { k^{-1}} u_{\alpha,k,M_\alpha} e_k(1/2). 
$$
For a given $\alpha\in \mathbb{N}^2$, we take 
$K_\alpha=K_0\times2^{\alpha_x}$ and 
$M_\alpha=M_0 \times 2^{\alpha_t}$.
In order to approximate $\Delta \varphi_\alpha(u_\alpha)$, 
we begin with an approximation of the highest resolution system $u_\alpha$.
For approximations involving $\alpha_x-1$, 
we retain only the subset of the first $K_{\alpha-e_x}$ modes.
{ For approximations involving $\alpha_t-1$,
we replace $\xi_{k,n}$ with 
$\hat \xi_{k,n} = e^{-\pi^2k^2T/M_\alpha}\xi_{k,2n} + \xi_{k,2n+1}$, 
for $n=0,1,\dots, M_{\alpha-e_t}-1$.  
This appropriate coupling is derived in Section 4.3 of \cite{chernov2016multilevel}. } 

\begin{figure}
  \includegraphics{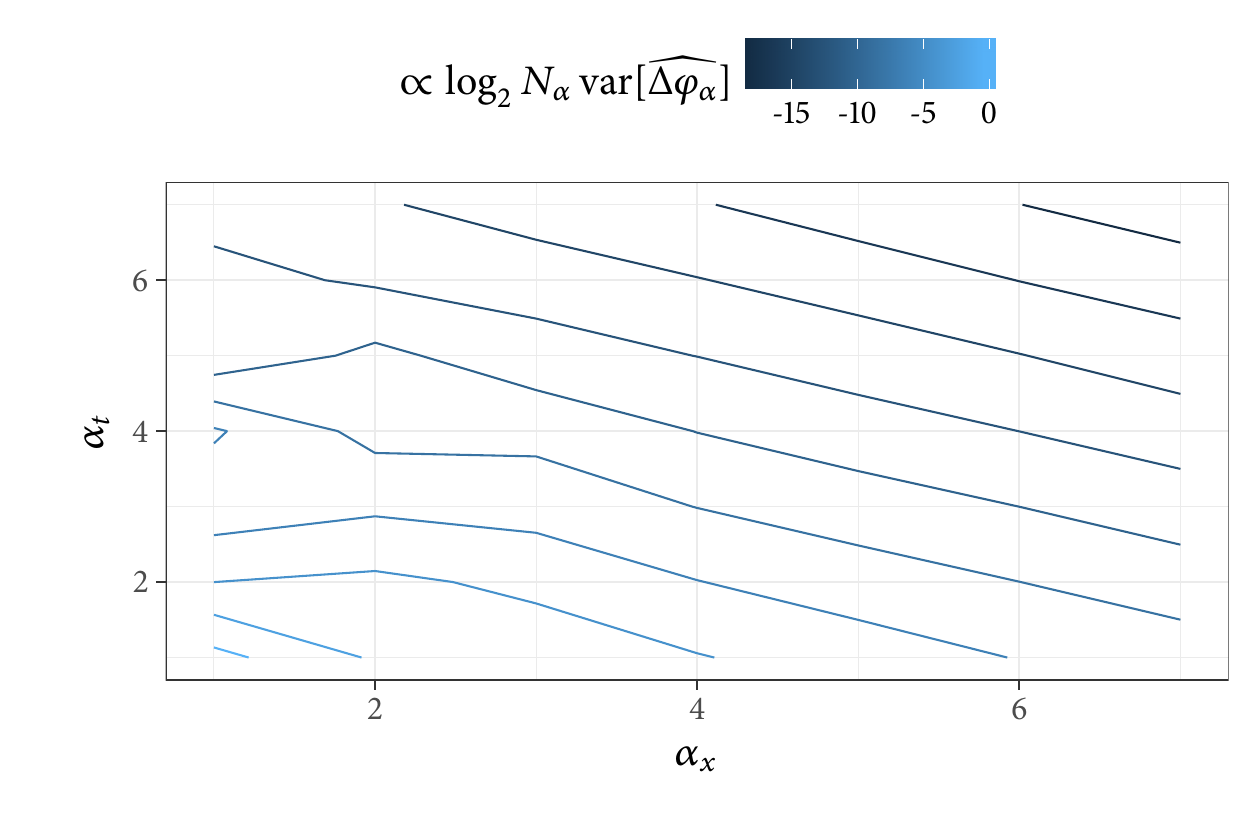}
  \caption{Estimated variance of the multi-increments over a grid of multi-indices.}
  \label{fig:rates}
\end{figure}

Note that if we can generate a proposal kernel for Metropolis-Hastings 
which keeps 
$P_\alpha(\dif u_\alpha)$ invariant, then we can use this to generate a coupled 
proposal kernel which keeps $Q_{\alpha}$ invariant.
The target is continuous with respect to $Q_{\alpha}$,
so this is sufficient for (A\ref{hyp:C}), given (A\ref{hyp:A}).
More specifically, notice that $P_\alpha$ is generated by a $K_\alpha M_\alpha$
dimensional standard Gaussian $N(0,I)$.
We keep this measure invariant by using the following pCN proposal 
\cite{beskos1} within Metropolis-Hastings, 
for some $\rho\in (0,1)$ to be tuned for an appropriate acceptance probability around $1/2$,
$$
X' = (1-\rho)^\frac12 X^{(n)} + \rho^\frac12 \eta_n, \quad \eta_n\sim N(0,I) \ .
$$
For each given random variable $X^{(n)}$, drawn from the pCN
proposal which keeps $N(0,I)$ invariant, we simply construct the draw 
$(u_{\alpha(1)},\dots,u_{\alpha(k_\alpha)})^{(n)}$ as described 
above, and clearly these pushed forward random variables 
will keep $Q_{\alpha}$ invariant.  
The acceptance probability will therefore depend only upon the ratios
$$
G_\alpha((u_{\alpha(1)},\dots, u_{\alpha(k_\alpha)})')/
G_\alpha((u_{\alpha(1)},\dots, u_{\alpha(k_\alpha)})^{(n)}) \, .$$

Denoting the approximate solution at time $t_n=n h_t$
by $u_{\alpha,n}$, \cite{spde_disc} provides the following estimate, for any $\epsilon>0$,
\begin{equation}\label{eq:errorest}
\sup_{n=1,\dots, M_\alpha} \left( \bbE |u(t_n) - u_{\alpha,n}|^2 \right)^{1/2} 
\leq C (K_\alpha^{-1/2 +\epsilon} + M_\alpha^{-1}\log(M_\alpha)).
\end{equation}
We postulate that the mixed regularity is sufficient for the convergence rate 
$$
\left( \bbE |\Delta \varphi_\alpha|^2 \right)^{1/2} \leq 
C 2^{-\alpha_x/2 - \alpha_t + \epsilon}.
$$
Indeed this is verified numerically, as illustrated in Figure \ref{fig:rates}.

\begin{figure}
  \includegraphics{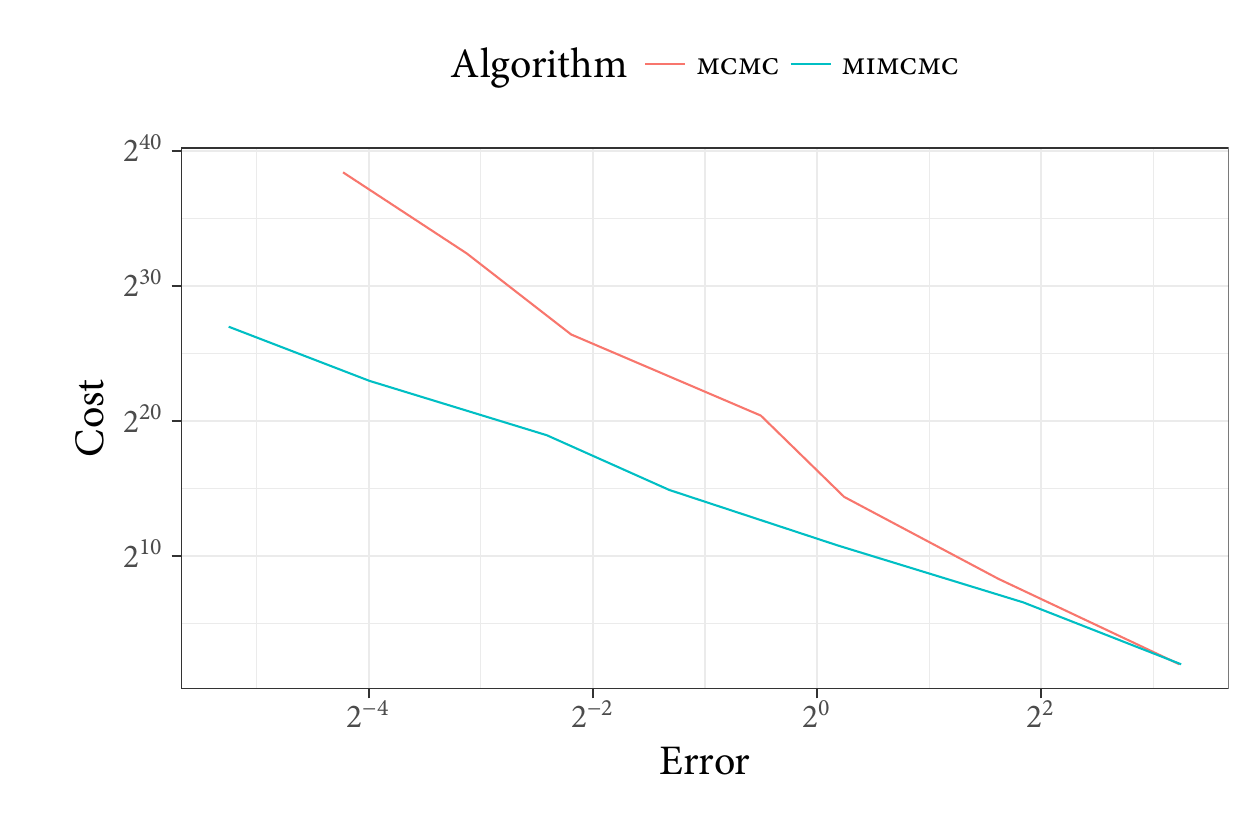}
  \caption{Cost vs error at precision levels $\alpha = (2, 1), (4, 2), \dots,
  (14, 7)$.}
  \label{fig:cost}
\end{figure}

The optimal choice of discretization according to \cite{spde_disc} is $K = M^2$, 
following from \eqref{eq:errorest} and the fact that the cost for a single realization
is proportional to $KM$.  
The main result of \cite{spde_disc} is the estimate \eqref{eq:errorest},
which 
provides a bound on the strong error proportional to $h_t=M^{-1}$, 
with a cost proportional to $M^{3}$,
for this choice of $K$.  
This provides a total cost rate for MC (or an optimal cost for MCMC) of
Cost$(\varepsilon) \propto \varepsilon^{-5}$.  
For MIMCMC, Proposition \ref{pro:mimcmc} shows that if one chooses
$L_x = 2 L_h \geq 2 |\log(\varepsilon/2)|$, 
and $N_\alpha=\varepsilon^{-2} L_x 2^{-\alpha_x-3 \alpha_t/2}$, 
then Cost$(\varepsilon) \propto \varepsilon^{-2} \log^2(\varepsilon/2)$, 
with a logarithmic penalty due to the fact that $\beta_x=\gamma_x=1$.
However in this case $\sum_{i=1}^d \gamma_i/w_i =3$,
and theoretically optimal $N_{(L_1:L_d)} =o(1)$.
When we replace this 
by $N_{(L_1:L_d)} =1$,
then the cost is dominated by $C_{(L_1:L_d)} \propto \varepsilon^{-3}$.

The true solution is computed as described in the appendix \ref{app:spde}, 
for the reference, and the MSE for is computed by comparing this to 
the results of 30 MIMCMC estimators, using the pCN method above to
 generate the driving Gaussian for each $\alpha$.
For MCMC the fitted rate is about $-5$.
For MIMCMC the fitted rate is about $-2.9$.
The main cost vs error result is shown in Figure~\ref{fig:cost}.

\subsubsection*{Acknowledgements} 
{ We extend our gratitude to H{\aa}kon Hoel for 
identifying the SPDE coupling, which we have used here. }
AJ \& YZ were supported by Ministry of Education AcRF tier 2 grant,
R-155-000-161-112. 
KJHL was supported by  
ORNL LDRD Seed funding grant number 32102582.
AJ and KK were supported by JST CREST grant number JPMJCR14D7, and KK was additionally supported by JSPS KAKENHI grant number 16K00046.

\appendix

\section{Technical Result}

Set 
$$
\tilde{\varphi}(x_{\alpha(1)},\dots,x_{\alpha(k_{\alpha})})  := \sum_{i=1}^{k_{\alpha}'} (-1)^{|\alpha(k_{\alpha})-\alpha(2i)|}\Big\{
\varphi_{\alpha(2i)}(x_{\alpha(2i)}) H_{2i,\alpha}(x_{\alpha(1)},\dots,x_{\alpha(k_{\alpha})}) -
$$
$$
\varphi_{\alpha(2i-1)}(x_{\alpha(2i-1)}) H_{2i-1,\alpha}(x_{\alpha(1)},\dots,x_{\alpha(k_{\alpha})})
\Big\}
$$
where $\varphi_{\alpha(i)}, H_{i,\alpha}$ etc are as in Proposition \ref{prop:main_res}. Also
set 
$$
\bar{\varphi}(x_{\alpha(1)},\dots,x_{\alpha(k_{\alpha})})  :=
G_{\alpha}(x_{\alpha(1)},\dots,x_{\alpha(k_{\alpha})})
\tilde{\varphi}(x_{\alpha(1)},\dots,x_{\alpha(k_{\alpha})}).
$$
We have the following result.

\begin{lem}[Variance of multi-increment]\label{lem:tech_res}
Assume (A\ref{hyp:A}-\ref{hyp:B}). Then for $\varphi:\mathsf{X}\rightarrow\mathbb{R}$ bounded
there exist a $C<+\infty$ independent of $\alpha$ such that
$$
\mathbb{V}\textrm{ar}_{\Pi_{\alpha}}[\tilde{\varphi}(X_{\alpha(1)},\dots,X_{\alpha(k_{\alpha})})]
\leq C D(\alpha)
$$
where $D(\alpha)$ is as in Proposition \ref{prop:main_res}.
\end{lem}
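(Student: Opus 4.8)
The plan is to transfer the computation to the prior coupling $Q_\alpha$, where Assumption (A\ref{hyp:B}) is tailored to control signed multi-increments, and to absorb the change of measure to $\Pi_\alpha$ through the crude two-sided likelihood bound of (A\ref{hyp:A}). First I would note that $G_\alpha = \max_i g_{\alpha(i)}(y|\cdot)$ is a maximum of factors each lying in $[\underline C,\overline C]$, so $\underline C \le G_\alpha \le \overline C$ pointwise, and hence the normalising constant $Z_\alpha := \mathbb{E}_{Q_\alpha}[G_\alpha]$ also satisfies $\underline C \le Z_\alpha \le \overline C$. Consequently the density $\dif\Pi_\alpha/\dif Q_\alpha = G_\alpha/Z_\alpha$ is bounded above and below by constants depending only on $\underline C,\overline C$ and not on $\alpha$.

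Next I would bound the variance by the second moment, $\mathbb{V}\textrm{ar}_{\Pi_\alpha}[\tilde\varphi]\le\mathbb{E}_{\Pi_\alpha}[\tilde\varphi^2]$, and rewrite the latter over $Q_\alpha$. The key algebraic point is that the denominators $G_\alpha$ concealed in each $H_{i,\alpha}=g_{\alpha(i)}/G_\alpha$ are cleared upon multiplication by $G_\alpha$: by definition $\bar\varphi = G_\alpha\tilde\varphi$ carries no denominators and is exactly the signed multi-increment of the products $\psi_{\alpha(i)} := \varphi_{\alpha(i)}\,g_{\alpha(i)}(y|\cdot)$. Using $\tilde\varphi = \bar\varphi/G_\alpha$ one has $\mathbb{E}_{\Pi_\alpha}[\tilde\varphi^2] = Z_\alpha^{-1}\mathbb{E}_{Q_\alpha}[G_\alpha\tilde\varphi^2] = Z_\alpha^{-1}\mathbb{E}_{Q_\alpha}[\bar\varphi^2/G_\alpha]$, and the two-sided bounds $G_\alpha\ge\underline C$, $Z_\alpha\ge\underline C$ then give $\mathbb{E}_{\Pi_\alpha}[\tilde\varphi^2]\le\underline C^{-2}\,\mathbb{E}_{Q_\alpha}[\bar\varphi^2]$.

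It then remains to control $\mathbb{E}_{Q_\alpha}[\bar\varphi^2]=\mathbb{V}\textrm{ar}_{Q_\alpha}[\bar\varphi]+(\mathbb{E}_{Q_\alpha}[\bar\varphi])^2$. Since $\varphi$ is bounded and $g_{\alpha(i)}(y|\cdot)\le\overline C$, the family $\psi = \varphi\,g$ is again a bounded test function, and $\bar\varphi$ is precisely the increment to which (A\ref{hyp:B}) applies with $\varphi$ replaced by $\psi$. Its first-moment estimate, taken with test function $f\equiv1$, yields $|\mathbb{E}_{Q_\alpha}[\bar\varphi]|\le C_1(\alpha)$, and its variance estimate yields $\mathbb{V}\textrm{ar}_{Q_\alpha}[\bar\varphi]\le C_2(\alpha)^2$, with $C_1,C_2$ carrying the same vanishing rates as those entering $D(\alpha)$ (the passage from $\varphi$ to $\psi$ only alters the implied constants, which are absorbed into $C$). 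Hence $\mathbb{E}_{Q_\alpha}[\bar\varphi^2]\le C(C_1^2+C_2^2)\le 2C\max\{C_1^2,C_2^2\}\le 2C\,D(\alpha)$, and after collecting $\underline C,\overline C$ into a single $C<+\infty$ independent of $\alpha$ the claimed bound follows.

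The one genuinely non-routine step---and the content of the augmentation of Proposition A.1 mentioned after Proposition \ref{prop:main_res}---is the identification of $G_\alpha\tilde\varphi$ with the full difference-of-differences increment of $\psi=\varphi g$, so that (A\ref{hyp:B}) can be invoked verbatim. In the single-index setting of \cite{jasra} one meets only a simple increment $\psi_\alpha-\psi_{\alpha-1}$, whereas here the operator $\Delta=\Delta_d\cdots\Delta_1$ produces the $2^{d-1}$-term alternating sum above; the care required is to verify that the signs $(-1)^{|\alpha(k_\alpha)-\alpha(2i)|}$ and the index pairing match those in (A\ref{hyp:B}), after which the mixed-regularity content of that assumption supplies the decay. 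I expect no difficulty beyond this bookkeeping, as both the reweighting and the passage from variance to second moment are handled by the coarse likelihood bounds alone.
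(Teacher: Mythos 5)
Your proof is correct, and it reaches the stated bound by a slightly different and in fact more economical route than the paper. The paper works with the exact decomposition
\begin{equation*}
\mathbb{V}\textrm{ar}_{\Pi_{\alpha}}[\tilde{\varphi}]
= \frac{1}{Z_{\alpha}}\Big[V_{\alpha} + B_{\alpha}^2 Z_{\alpha} + 2B_{\alpha}F_{\alpha}\Big],
\end{equation*}
where $B_{\alpha}$ is the difference of the $Q_\alpha$- and $\Pi_\alpha$-means of $\tilde\varphi$, $V_\alpha$ is a $G_\alpha$-weighted centered second moment, and $F_\alpha$ a cross term controlled via Jensen's inequality; each piece is then bounded using (A\ref{hyp:A}) and (A\ref{hyp:B}), producing the three quantities $C_1^2$, $C_2^2$, $C_1C_2$ that motivate the definition of $D(\alpha)$. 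You instead discard the centering entirely, bound the variance by the raw second moment, and change measure to $Q_\alpha$ in one step using $\dif\Pi_\alpha/\dif Q_\alpha = G_\alpha/Z_\alpha$ together with the two-sided bounds on $G_\alpha$ and $Z_\alpha$; this avoids the cross term and the Jensen step altogether, at the harmless cost of not exhibiting the $C_1C_2$ contribution (which is dominated by $\max\{C_1^2,C_2^2\}$ in any case). Both arguments rest on the same two observations: (A\ref{hyp:A}) gives $\underline{C}\leq G_\alpha, Z_\alpha\leq \overline{C}$ uniformly in $\alpha$, and $\bar\varphi = G_\alpha\tilde\varphi$ is exactly the signed multi-increment of $\psi = \varphi\, g$, to which (A\ref{hyp:B}) applies. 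Your closing caveat is well placed but applies equally to the paper's own proof, which also invokes (A\ref{hyp:B}) for the bounded function $\psi=\varphi g$ (and with $f_\alpha = 1/G_\alpha$ or $f_\alpha=1$) rather than for $\varphi$ itself; the identification of the resulting $C_1,C_2$ with those entering $D(\alpha)$ is an implicit convention in both arguments.
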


\begin{proof}
Throughout the proof $C$ is a positive and finite scalar constant whose value may change from line-to-line, but does not depend upon $\alpha$.
Set 
\begin{eqnarray*}
Z_{\alpha} & = &
\int_{\bigotimes_{k=1}^{k_{\alpha}} \mathsf{X}_{\alpha(k)}} G_{\alpha}(x_{\alpha(1)},\dots,x_{\alpha(k_{\alpha})}) Q_\alpha(\dif (x_{\alpha(1)},\dots,x_{\alpha(k_{\alpha})})) \\
B_{\alpha} & = & \mathbb{E}_{Q_\alpha}[\tilde{\varphi}(X_{\alpha(1)},\dots,X_{\alpha(k_{\alpha})})]-
\mathbb{E}_{\Pi_{\alpha}}[\tilde{\varphi}(X_{\alpha(1)},\dots,X_{\alpha(k_{\alpha})})] \\
V_{\alpha} & = & \mathbb{E}_{Q_\alpha}[G_{\alpha}(X_{\alpha(1)},\dots,X_{\alpha(k_{\alpha})})
(\tilde{\varphi}(X_{\alpha(1)},\dots,X_{\alpha(k_{\alpha})}) - \\ & &
\mathbb{E}_{Q_\alpha}[\tilde{\varphi}(X_{\alpha(1)},\dots,X_{\alpha(k_{\alpha})})]
)^2]\\
F_{\alpha} & = & \mathbb{E}_{Q_\alpha}[G_{\alpha}(X_{\alpha(1)},\dots,X_{\alpha(k_{\alpha})})
(\tilde{\varphi}(X_{\alpha(1)},\dots,X_{\alpha(k_{\alpha})}) - \\ & &
\mathbb{E}_{Q_\alpha}[\tilde{\varphi}(X_{\alpha(1)},\dots,X_{\alpha(k_{\alpha})})]
)]
\end{eqnarray*}
then
$$
\mathbb{V}\textrm{ar}_{\Pi_{\alpha}}[\tilde{\varphi}(X_{\alpha(1)},\dots,X_{\alpha(k_{\alpha})})]
= \frac{1}{Z_{\alpha}}
\Big[V_{\alpha} + B_{\alpha}^2Z_{\alpha} + 2B_{\alpha}F_{\alpha}\Big].
$$
Note that
$$
B_{\alpha} = \mathbb{E}_{Q_\alpha}[\tilde{\varphi}(X_{\alpha(1)},\dots,X_{\alpha(k_{\alpha})})] -
\frac{1}{Z_{\alpha}}\mathbb{E}_{Q_\alpha}[\bar{\varphi}(X_{\alpha(1)},\dots,X_{\alpha(k_{\alpha})})].
$$
 
(A\ref{hyp:A}) establishes the existence of a $C>0$ such that 
$C^{-1} \leq Z_{\alpha}\leq C$.
Applying also (A\ref{hyp:B}), one has 
$$
|B_{\alpha}| \leq (1+C^{-1})C_{1}(\alpha).
$$
By (A\ref{hyp:B}) and (A\ref{hyp:A}),  
$V_{\alpha}\leq C C_{2}(\alpha)^2$. 
Furthermore, 
by (A\ref{hyp:A}) and Jensen's inequality,
$$|F_{\alpha}|\leq C V_\alpha^{1/2} \leq C C_{2}(\alpha).$$ 
Thus it easily follows that 
$$
\mathbb{V}\textrm{ar}_{\Pi_{\alpha}}[\tilde{\varphi}(X_{\alpha(1)},\dots,X_{\alpha(k_{\alpha})})]
\leq C D(\alpha).
$$
\end{proof}

\section{Analytical solution of the SPDE inverse problem}
\label{app:spde}

Let 
and let ${\bf u}$ denote the concatenated vector such that 
${\bf u}_{j}=u(1/3,t_j)$ for $j=1,\dots , m$, 
${\bf u}_{j}=u(2/3,t_{j-m})$ for $j=m+1,\dots, 2m$,
and ${\bf u}_{2m+1}=u(1/2,T)$.
Then ${\bf u} \sim N(m,\Sigma)$, 
where $m$ and $\Sigma$ are defined 
element-wise in the continuous-time limit by
\[\begin{split}
m_{j} = & \sum_{k=1}^{K_{\max}} e^{(-\pi^2 k^2 + \theta)
(t_j {\bf 1}_{\{j\leq m\}} + t_{j-m} {\bf 1}_{\{j> m\}}+T\delta_{j,2m+1})} u_{k,0} \times \\
& (e_k(1/3) {\bf 1}_{\{j\leq m\}} + e_k(2/3) {\bf 1}_{\{j> m\}} + e_k(1/2)\delta_{j,2m+1}) \, ,
\end{split} \]
and
$$
\Sigma_{ij} = 
\sum_{k}^{K_{\max}} e_k(1/3)^2
\left(\frac{\sigma^2(1-e^{2(\theta-\pi^2k^2)\min\{t_i,t_j\}})}{2(\theta-\pi^2k^2)}\right) 
{\bf 1}_{\{i,j \leq m\}} \ \,
+ 
$$
$$
\sum_{k=1}^{K_{\max}} e_k(2/3)^2 \left(\frac{\sigma^2(1-e^{2(\theta-\pi^2k^2)\min\{t_{i-20},t_{j-20}\}})}{2(\theta-\pi^2k^2)}\right) {\bf 1}_{\{m<i,j\leq 2m\}} \ \, +
$$
$$
\sum_{k=1}^{K_{\max}} e_k(1/3)e_k(2/3)\left(\frac{\sigma^2(1-e^{2(\theta-\pi^2k^2)\min\{t_i,t_{j-20}\}})}{2(\theta-\pi^2k^2)}\right) {\bf 1}_{\{i\leq m < j \leq 2m\}} \ \, + 
$$
$$
\sum_{k=1}^{K_{\max}} e_k(2/3)e_k(1/3)\left(\frac{\sigma^2(1-e^{2(\theta-\pi^2k^2)\min\{t_{i-20},t_{j}\}})}{2(\theta-\pi^2k^2)}\right) {\bf 1}_{\{j \leq m < i \leq 2m\}} \ \, + 
$$
$$
\sum_{k=1}^{K_{\max}} e_k(1/2)e_k(1/3)\left(\frac{\sigma^2(1-e^{2(\theta-\pi^2k^2)t_i})}{2(\theta-\pi^2k^2)}\right) {\bf 1}_{\{i \leq m, j=2m +1\}} \ \, + 
$$
$$
\sum_{k=1}^{K_{\max}} e_k(1/2)e_k(1/3)\left(\frac{\sigma^2(1-e^{2(\theta-\pi^2k^2)t_j})}{2(\theta-\pi^2k^2)}\right) {\bf 1}_{\{j \leq m, i=2m +1\}} \ \, + 
$$
$$
\sum_{k=1}^{K_{\max}} e_k(1/2)e_k(2/3)\left(\frac{\sigma^2(1-e^{2(\theta-\pi^2k^2)t_{i-m}})}{2(\theta-\pi^2k^2)}\right) {\bf 1}_{\{m< i \leq 2m, j=2m +1\}} \ \, + 
$$
$$
\sum_{k=1}^{K_{\max}} e_k(1/2)e_k(2/3)\left(\frac{\sigma^2(1-e^{2(\theta-\pi^2k^2)t_{j-m}})}{2(\theta-\pi^2k^2)}\right) {\bf 1}_{\{m< j \leq 2m, i=2m +1\}} \ \, + 
$$
$$
\sum_{k=1}^{K_{\max}} e_k(1/2)e_k(1/3)\left(\frac{\sigma^2(1-e^{2(\theta-\pi^2k^2)T})}{2(\theta-\pi^2k^2)}\right) {\bf 1}_{\{i=j=2m +1\}} \ \, .
$$
A similar expression can be obtained for the time-discretized version, 
but for our purposes, i.e. as a ground truth, this will be sufficient.

Given the additive Gaussian noise assumption on the observations,
the posterior is known 
and it is given by 
${\bf u}|y \sim N(\widehat{m}, \widehat{\Sigma})$,
where, letting $H=(I_{2m},{\bf 0}_{2m \times 1})$,  
  \begin{eqnarray}\label{eq:posteriorX}
\widehat{m} &=& 
\widehat\Sigma \left ( \left (\Sigma\right )^{-1} m + \frac{1}{\tau^2}H^\top y\right) \ , \\
\widehat{\Sigma} &=& \left (\frac{1}{\tau^2} H^\top H 
+ \left (\Sigma\right )^{-1} \right )^{-1} \ .
\end{eqnarray}


\begin{thebibliography}{9}

\bibitem{beskos1}
{ Beskos}, A., { Roberts}, G. O., { Stuart}, A. M., \&{ Voss}, J., 
MCMC methods for diffusion bridges,  
\emph{Stoch. Dynam.}, {\bf 8}, 319--350, 2009. 

\bibitem{beskos}
{ Beskos}, A., 
{ Jasra}, A., 
{ Law}, K. J. H., 
{ Tempone}, R., \& 
{ Zhou}, Y.,  
Multilevel Sequential Monte Carlo Samplers. {\em Stoch. Proc. Appl.} {\bf 127}, 1417--1440, 2017.

\bibitem{chernov2016multilevel}
{ Chernov}, A., { Hoel}, H., { Law}, K. J. H.,{  Nobile}, F. \& { Tempone}, R., 
{Multilevel ensemble Kalman filtering for spatially extended models, }
arXiv preprint arXiv:1608.08558.v2, 2017.


\bibitem{dodwell2015hierarchical}
{ Dodwell, T.J., Ketelsen, C., Scheichl, R., \& Teckentrup, A.L.}, 
{A hierarchical multilevel Markov chain Monte Carlo algorithm with applications to uncertainty quantification in subsurface flow}.  
\emph{SIAM/ASA Journal on Uncertainty Quantification}, 
{\bf 3}:{1}, {1075--1108}, 2015.

\bibitem{giles}
{ Giles}, M. B.,  
Multilevel Monte Carlo path simulation, 
\emph{Op. Res.}, {\bf 56}, 607-617, 2008.

\bibitem{giles1}
{ Giles}, M. B.,  Multilevel Monte Carlo methods, 
\emph{Acta Numerica} {\bf 24}, 259-328, 2015.

\bibitem{mimc}
{ Haji-Ali,} A. L., { Nobile}, F.  \& { Tempone}, R., 
Multi-Index Monte Carlo: When sparsity meets sampling, 
\emph{Numerische Mathematik}, {\bf 132}, 767--806, 2016.


\bibitem{hein}
{ Heinrich}, S., 
{Multilevel {Mo}nte {C}arlo methods}, 
In \emph{Large-Scale Scientific Computing}, (eds.~S. Margenov, J. Wasniewski \&
P. Yalamov), Springer: Berlin, 2001.

\bibitem{hoang2014determining}
{ Hoang, V.H., Law, K. J. H., \& Stuart, A.M.}, 
{Determining white noise forcing from Eulerian observations in the Navier-Stokes equation, }
\emph{Stochastic Partial Differential Equations: Analysis and Computations}
{\bf 2}:{2}, {233--261}, 2014.


\bibitem{hoang2013complexity}
{ Hoang, V.H., Schwab, C., \& Stuart, A.M.}, 
{Complexity analysis of accelerated MCMC methods for Bayesian inversion, }
\emph{Inverse Problems} {\bf 29}:{8}, {085010}, 2013.




\bibitem{jasra}
{ Jasra}, A., { Kamatani}, K., { Law}, K. J. H. \& { Zhou}, Y., 
Bayesian Static Parameter Estimation for Partially
Observed Diffusions via Multilevel Monte Carlo, arXiv preprint, 2017.


\bibitem{mlpf}
{ Jasra}, A., { Kamatani}, K., { Law} K. J. H. \& { Zhou}, Y., Multilevel particle filters,
\emph{SIAM J. Numer. Anal.} (to appear), 2017.

\bibitem{mlpf1}
{ Jasra}, A., { Kamatani}, K., { Osei}, P. P., \& { Zhou}, Y., 
Multilevel particle filters: Normalizing Constant Estimation, \emph{Statist. Comp.} (to appear), 2017.

\bibitem{ours}
{ Jasra, A., Law, K. J. H., \& Zhou, Y.},  
Forward and inverse uncertainty quantification using multilevel 
Monte Carlo algorithms for an elliptic nonlocal equation, 
\emph{International Journal for Uncertainty Quantification}, 
{\bf 6}(6), 501--514, 2016.


\bibitem{spde_disc}
{ Jentzen}, A., \& { Kloeden}, P., Overcoming the order barrier in the numerical
approximation of stochastic partial differential equations with additive 
space-time noise, \emph{Proc.~Roy. Soc. A},
{\bf 465}, 649--667, 2009.

\bibitem{law2015data}
{ Law, K.J.H., Stuart, A.M. \& Zygalakis, K.}, 
\emph{Data Assimilation: A Mathematical Introduction},
{Springer Texts in Applied Mathematics vol. 62}, 2015.


\bibitem{llospis}
{ Llopis}, F., { Kantas}, N., { Beskos}, A. \& { Jasra}, A.,
Particle Filtering for stochastic Navier Stokes signals observed with linear additive noise, Technical Report,
Imperial College London, 2017.

\bibitem{stuart2010inverse}
 { Stuart, Andrew M}, 
 {Inverse problems: a Bayesian perspective, }
 \emph{Acta Numerica}, {\bf 19}, {451--559}, 2010.


\end{thebibliography}
\end{document}